\newcommand{\N}{\ensuremath{\mathbb{N}}}
\newcommand{\R}{\ensuremath{\mathbb{R}}}
\newcommand{\E}{\ensuremath{\mathbb{E}}}
\renewcommand{\P}{\ensuremath{\mathbb{P}}}
\newcommand{\ind}[1]{\ensuremath{\mathbbm{1}_{\left\{#1\right\}}}}
\newcommand{\cal}[1]{\ensuremath{\mathcal{#1}}}
\def\var{\mathrm{var}}
\newcommand{\diff}{\mathop{}\mathopen{}\mathrm{d}}
\newtheorem{definition}{Definition}
\newtheorem{proposition}{Proposition}
\newtheorem{corollary}{Corollary}[proposition]
\newtheorem{lemma}{Lemma}
\newtheorem{theorem}{Theorem}
\title{A stochastic model of the production of multiple  proteins in cells}
\date{\today}
\author[V. Fromion]{Vincent Fromion}
\email{Vincent.Fromion@jouy.inra.fr}
\address[V. Fromion, E. Leoncini]{MIG Mathematic, Computing Science and Genome, INRA,
Domaine de Vilvert, 78350 Jouy-en-Josas, France}
\author[E. Leoncini]{Emanuele Leoncini}
\email{Emanuele.Leoncini@inria.fr}
\author[Ph. Robert]{Philippe Robert}
\email{Philippe.Robert@inria.fr}
\address[E. Leoncini,Ph. Robert]{INRIA Paris---Rocquencourt, Domaine de Voluceau, 78153 Le Chesnay, France}
\keywords{}
\date{\today}
\begin{document}

\begin{abstract}
The production processes of proteins in prokaryotic cells are investigated.  Most of the mathematical models in the literature study the production of {\em one} fixed type of proteins. When several classes of proteins are considered, an important additional aspect has to be taken into account, the limited common  resources of the cell (polymerases and ribosomes) used by the production process.  Understanding  the impact of this limitation is a key issue in this domain. In this paper we focus on the allocation of ribosomes in the case of the production of multiple proteins.  The cytoplasm of the cell being a disorganized medium subject to thermal noise,  the protein production process has an important stochastic component. For this reason, a Markovian model of this process is introduced. Asymptotic results of the equilibrium are obtained under a scaling procedure and a realistic biological assumption of saturation of the ribosomes available in the cell.  It is shown in particular that,  in the limit,  the number of non-allocated ribosomes at equilibrium converges in distribution to a Poisson distribution whose parameter satisfies a fixed point equation. It is also shown that the production process of different types of proteins can be seen as  independent production processes but with modified parameters. 
\end{abstract}

\maketitle 

\bigskip

\hrule

\vspace{-3mm}

\tableofcontents

\vspace{-1cm}

\hrule

\bigskip

\section{Introduction}
 The \emph{gene expression} is the process by which the genetic information of a biologic cell is synthesized into a functional product, the proteins.  The production of proteins is the most important cellular activity, both for the functional role and the high associated cost in terms of resources (in prokaryotic cells it can reach up to $85$\% of the cellular resources). In particular, in a B. subtilis  bacterium there are about $3.6\cdot 10^6$ proteins of approximately $2000$ different types with a large variability in
concentration, depending on their types: from a few dozen up to $10^5$. See e.g. Muntel et al.~\cite{Muntel}  where the absolute quantification of cytosolic  proteins  is reported in different  growth rate conditions.

The information flow from DNA genes to proteins is a {\em fundamental process}, common to all living organisms and is composed of two main fundamental processes: \emph{transcription} and \emph{translation}, described briefly thereafter. 
\begin{itemize}
\item {\em Transcription}.  During this step, the {\em RNA polymerase} binds to the DNA and makes a copy, an mRNA, of a specific DNA sequence. The mRNA is a chemical “blueprint” for a specific protein and is constituted of nucleotides.
\item {\em Translation}. The assembly of a polypeptide chain is achieved through a large complex: the {\em ribosome}. During this step, the ribosome binds to an mRNA and builds the polypeptide chain using the mRNA as a template.

\end{itemize}
See Chapter~12 and~14 of Watson et al.~\cite{Watson} for a much more detailed description of these complex processes. 
\subsection*{Stochastic Nature of Transcription and Translation}
The gene expression is a highly stochastic process: protein production is the result of the realization of a very large number of elementary stochastic processes.  The randomness is basically due to the thermal excitation of the environment. It drives the free diffusion of the main components, polymerases, messengers and ribosomes in the cytoplasm which can be seen as a viscous fluid.  For this reason it impacts the pairing of two cellular components freely diffusing through the cytoplasm or it can cause the spontaneous rupture of the bonds and subsequent release of paired components such as polymerases on DNA or ribosomes on mRNAs before one of the steps transcription or translation is completed.

It should be noted that this description is largely simplified since the active processes associated to polymerases and ribosomes are composed of highly sophisticated steps.  For example, once a polymerase is bound to the gene, the messenger chain is built through a series of specific stochastic processes, in which the polymerase recruits one of the four nucleotides in accordance to the DNA template.  Additionally, a dedicated proof reading mechanism takes place during this process.  A similar description is associated to the translation step.  

\subsection*{Competition for Resources of the Cell}
Each type of protein has its dedicated gene in the DNA and its associated mRNAs but the polymerases and ribosomes are the common resources used in the productions process. Due to their complexity, the cost of these  complex molecules  is high and, for this reason, their number is controlled within the cell.  In first approximation, protein production can be seen as a competition between genes [resp. mRNAs]  for polymerases [resp. ribosomes].  

\subsection*{Literature}
The first stochastic models of the production of {\em one} type of protein were investigated in the late 70s by Berg~\cite{Berg1978} and Rigney~\cite{Rigney1979, Rigney1977}, Shahrezaei and Swain~\cite{Swain2008} and reviewed by Paulsson~\cite{Paulsson}.  For a long period of time, it has not been possible to compare these theoretical results to real data, because of the lack of appropriate laboratory techniques. In the last two decades, the introduction of reliable expression reporter techniques and the use of fluorescent reporters, such as the GFP (Green Fluorescent Protein), has allowed observations in live cells and the experimental quantification of the protein production at single cell level. See Taniguchi et al.~\cite{Taniguchi} for the experimental characterization of a large number of messengers and proteins of E. Coli.  See also Fromion et al.~\cite{Fromion}.  In these studies, only one type of protein is considered, the variance at equilibrium of the concentration of the protein is estimated in terms of the rates of gene activation/inactivation, the rates of binding of polymerases [resp. ribosomes] to genes [resp. mRNAs] and the degradation rates. The main goal of these works is to obtain a better understanding of the mechanisms which allow the cell to produce the right proportions of proteins despite the intrinsic randomness and to control to some extent the variance of the protein production system.

Concerning the case of multiple proteins, the mathematical results concerning the allocation of polymerases/ribosomes to the production of multiple proteins seem to be quite scarce. The paper Mather et al.~\cite{Mather} discusses the impact of the fluctuations of the number of mRNAs on the number of proteins of each type at equilibrium. In the case of two types of proteins  formulas for mean and variances  at equilibrium are provided.  

\subsection*{A Model of Protein Production}
We describe the general model used to analyze the production of multiple types of proteins. We focus on the competition for ribosomes, in particular the number of mRNA's associated to each type of protein is supposed to be fixed. 

\medskip
\noindent
{\bf Model for One Type of  Protein.}\\
We first recall the classical model for the production of one type of protein, see Paulsson~\cite{Paulsson} for example. 
\begin{enumerate}
\item A ribosome is bound on a given mRNA at rate $\lambda$ if the total number of ribosomes already bound to this mRNA is strictly less that $C>0$.
The quantity $\lambda$ represents the {\em affinity} of an mRNA associated to this protein.
\item A bound ribosome produces a protein at rate $\mu$.
\item Dilution/Proteolysis: we consider an effective rate $\gamma$ of protein decay, which accounts for both dilution and proteolysis.
\end{enumerate}
Note that in this case, there is no real restriction on the number  of available ribosomes, these resources are, in some way, represented by the affinity parameter $\lambda$ as a global variable. If the number of mRNA's is fixed,  the number $(X(t))$ of ribosomes bound to a given mRNA is in fact an $M/M/C$ queue, see Asmussen~\cite{Asmussen}. In particular  the distribution of $(X(t))$ at equilibrium is a truncated geometric distribution $(\rho^k/Z, 0\leq k\leq C)$ with  $\rho=\lambda/\mu$ and $Z$ is the normalization constant. 

Up to now, most of explicit formulas for the distribution of the number of proteins have been obtained in this simplified framework. 

\medskip
\noindent
{\bf A Multi-Type Model.}\\
 When considering the production of many types of proteins, we can divide the proteins according to their concentration.  If $P$ is the number of different concentrations of proteins present in the cell, for $1\leq p\leq P$,  we denote by $K_p$ the number of types of proteins with the concentration. For $1\leq k\leq K_p$, one will refer to as a type/class $(p,k)$ protein. Note that to each class of proteins corresponds a class of messenger RNAs.  It is assumed that a class $(p,k)$ mRNA can accept simultaneously at most $C_p$ ribosomes.  

The total number  of available ribosomes is $N$  and the  number of different types of  mRNAs $(K_p, 1\leq p\leq P)$  is fixed.  Ribosomes which are not bound to an mRNA are called {\em free ribosomes.}  
\begin{enumerate}
\item A free ribosome  binds  to an mRNA of class $(p,k)$  at rate  $\lambda_p$ provided that there are  less than $C_p$ ribosomes already attached on this mRNA.
\item In the case in which at least one ribosome is attached to a mRNA of class $(p, k)$, at
rate $\mu_p$ a protein of class $p$ is produced, a ribosome detaches from this mRNA and it
reintegrates the free ribosomes pool.
\end{enumerate}
The important assumption of the model is that the characteristics of the various proteins and mRNAs of class $(p,k)$ depend only on the parameter $p$ related to the concentration.  Without loss of generality, we assume that to each class $(p, k)$ corresponds one mRNA. This assumption is not restrictive since the quantity $K_p$ can be increased to take into account the case of several mRNAs per class $(p, k)$. 

It should be noted that,  since the number of ribosomes is assumed to be fixed,  the feedback effect due to the consumption of proteins to make new ribosomes is  not  taken into account in the analysis. This phenomenon may not be negligible as some studies suggest, see Warner et al.~\cite{Warner}. 

The stochastic model in Mather et al.~\cite{Mather} also assumes a fixed number of mRNAs and ribosomes.  The main difference 
with our description is that the authors encapsulate the number of  ribosomes attached to mRNAs as a  one dimensional  process with finite state space ${\cal S}$. To each state of ${\cal S}$ corresponds a rate at which a ribosome is attached to an mRNA of a given type or at which a ribosome is detached (and therefore a protein is created). In particular, there is no limitation on the number of ribosomes attached to a given mRNA, except that it must be less than the total number of ribosomes.  As it will be seen our description is more detailed since we keep the number of ribosomes attached to each mRNA to represent the production process.  As a consequence one deals with a (large) multi-dimensional process instead of a one-dimensional  process to describe the activity of ribosomes.  

\subsection*{Saturation Condition}
To stick with the  biological reality, it is assumed that a {\em saturation condition} holds, i.e. that the total  number $N$ of ribosomes  is  smaller than the number  that mRNAs could use in an unconstrained system.   Ribosomes are made of a large number of mRNAs and proteins and, for this reason,  are expensive components of the cell,  their number is therefore limited. See Warner et al.~\cite{Warner} in the case of yeast for example.  In this paper the saturation condition used is that  $N$ is small than the {\em average} number of ribosomes of an {\em unconstrained} system {\em at equilibrium}.  This notion is precised in Section~\ref{Equi}.   In Mather et al.~\cite{Mather} the saturation condition is that $N$ is smaller than the {\em maximum} number of ribosomes which can be attached to mRNAs. As it can be seen, with this later condition and appropriate parameters, the restriction on the number of ribosomes can be marginal since the number of ribosomes required at equilibrium  can be in some cases much smaller than the maximum number of ribosomes which can be attached to  mRNAs.

\subsection*{Mathematical Background: Limit Theorems of Gibbs Distributions}
As it will be seen, the equilibrium of the Markov process associated to this biological system can be expressed in terms of i.i.d. random variables conditioned on some event.  Equilibrium has a  product form distribution,  a Gibbs distribution in the language of statistical mechanics.  A classical scaling setting consists in letting the number of components (sites) go to infinity and to analyze the evolution of the distribution of the state of a component.  This situation has already been investigated in some queueing systems, see Kelly~\cite{Kelly:14}, see also~\cite{Anselmi,Fayolle,Malyshev} for example.  One can nevertheless trace these methods back to Khinchin~\cite{Khinchin} in the more general context of statistical mechanics  to obtain limit results in the thermodynamic limit, i.e. when the volume (number of particles) is going to infinity.  It is shown than local limit theorem can be used to prove such results.  See Chapter~V of Khinchin~\cite{Khinchin}.   Local limit theorems are refinements of the classical central limit theorem and therefore may be difficult to establish.  Dobrushin and Tirozzi~\cite{Dobrushin} simplified this approach by showing that an integral version of such a result is enough. 

\subsection*{Outline of the Paper}
The main results of this paper, Theorems~\ref{theo1} and~\ref{TheoEqui} consider the case where  the total number of ribosomes $N$ is large and when the number $K_p$ of mRNA's associated to proteins of class $1\leq p\leq P$ scales with respect to $N$.  At equilibrium, when $N$ gets large and under a saturation condition (ribosomes are a scarce resource),  it is shown that the number of free ribosomes converges in distribution to a  Poisson distribution with parameter $\delta<1$, where $\delta$ is the solution of a fixed point equation.   Furthermore,  the number of ribosomes bound to an mRNA of class $p$ converges in distribution to a truncated geometric distribution with parameter $\lambda_p\delta/\mu_p$.  

The main consequence of these results is the following: under this scaling regime and the saturation condition, the number of free ribosomes is converging in distribution to a finite random variable (recall that the total number $N$ of ribosomes goes to infinity). Additionally, the production of individual type of proteins become independent in the limit, meaning that the analysis of a specific type of protein of class p can still be performed using a classical one protein model, where the affinity parameter $\lambda_p$ which controls the translation speed is replaced by $\lambda_p\delta$. The parameter $\delta$ encodes therefore the interaction of the different production processes occurring simultaneously, through a deterministic fixed point equation.  The quantity $\delta$ can be seen as encoding in some way the interaction of the different production processes but only through a deterministic fixed point equation. It is also shown that for $n\geq 1$, the numbers of ribosomes bound to mRNA's of class $1\leq p_1\leq p_2\leq \cdots\leq p_n\leq P$ respectively become independent random variables in the limit as $N$ goes to infinity.  This is typical of a mean-field interaction property. See Sznitman~\cite{Sznitman:06} for example.

The paper is organized as follows. Section~\ref{secMod} introduces the Markov process describing the protein production steps. An explicit expression of its invariant distribution is derived. In Section~\ref{Equi}, the asymptotic results concerning the distribution of the number of free ribosomes are obtained as well as the independence property of the production processes of the different types of proteins.  The basic ingredients of the proof are a change of probability distribution and a strengthened version of the central limit theorem for lattice distributions, known as Edgeworth expansion.  The method turns out to be a probabilistic analogue of a saddle point method. One concludes in Section~\ref{BioSec} with a discussion of the significance of these results, in particular of the fixed point equation, from a biological point of view. Several simple expansions are derived. 

\subsection*{Acknowledgments} The third author is grateful to Danielle Tibi  for pointing out Reference~\cite{Dobrushin}.

\section{Stochastic Model}\label{secMod}
The notations and assumptions for the model of  protein  production are introduced.  The ribosome pool is supposed to be constituted of $N$ ribosomes.

The concentration of proteins have $P$ possible values. For $1\leq p\leq P$, there are $K_p$ different types of mRNAs associated with the  $p$th concentration. For $1\leq k\leq K_p$, the corresponding mRNA (resp. proteins produced by this mRNA) will be referred to as being of class $(p,k)$.  The cell described by this model has therefore $K_1+\cdots+K_P$ different types of proteins.  As noted before the fact that there is one mRNA per each $(p,k)$ class is not restrictive.

The affinity of mRNAs of class $(p,k)$, i.e. the rate at which a free ribosome binds on such mRNA, is denoted as $\lambda_p$.  Similarly, once a ribosome is bound to an mRNA of class $(p,k)$, a protein of class $(p,k)$ is produced at rate $\mu_p$. One defines $\rho_p=\lambda_p/\mu_p$.  A protein of class $(p,k)$ dies, via enzymatic degradation or dilution, at rate $\gamma_p$. Finally, the maximum number of ribosomes that can attach to an mRNA of class $(p,k)$ is denoted by $C_p$.

For $1\leq p\leq P$ and $1\leq k\leq K_p$, one denotes by  $X_{p,k}(t)$  the number of ribosomes attached to the mRNA of class $(p,k)$ at time $t$.  Note that
\[
N- \sum_{p=1}^P \sum_{k=1}^{K_p} X_{p,k}(t),
\]
is the number of free ribosomes at time $t$, i.e. the ribosomes which are not attached to any mRNA. 

\subsection*{State Space and $Q$-matrix for Ribosomes}
The Markov process describing the production  of proteins is    $(X(t))=((X_{p,k}(t),1\leq k\leq K_p),1\leq p\leq P)$, it is  an irreducible Markov process with values in the state space
\[
{\cal S}=\left\{\left(x_{p,k},\substack{1\leq k\leq K_p,\\1\leq p\leq P}\right): x_{p,k}\in \N,  0\leq x_{p,k}\leq C_{p}  \text{ and } \sum_{p=1}^P\sum_{k=1}^{K_p}x_{p,k}\leq N\right\}.
\]
Because of our assumptions, the $Q$-matrix $Q=(q(x,y), x, y\in{\cal S})$  is given by, for $1\leq k\leq K$ and $x=(x_{p,k})\in{\cal S}$, 
\begin{equation}\label{QMat}
\begin{cases}
q(x,x+e_{p,k})= \lambda_pr(x), &\text{ if } x_p+1\leq C_p ,\\
q(x,x-e_{p,k})= \mu_p, &\text{ if }x_p>0,
\end{cases}
\end{equation}
where 
\[
r(x)=N-\sum_{p=1}^P \sum_{k=1}^{K_p} x_{p,k}
\]
and $e_{p,k}$ is the unit vector associated to the coordinate $(p,k)$. The first equation of Relation~\eqref{QMat}  expresses the fact that any free ribosome binds at rate $\lambda_p$ on  the mRNA of type $(p,k)$ provided there is some room left. 

\subsection*{Variables for Proteins}
For $1\leq p\leq P$ and $1\leq k\leq K_p$, one defines as $Y_{p,k}(t)$  the number of proteins of class $(p,k)$ at time $t$. The process $(U(t))$ defined by
\[
(U(t))= (((X_{p,k}(t),Y_{p,k}(t)), 1\leq k\leq K_p), 1\leq p\leq P)
\]
is then a Markov process. If $U(t)=((x_{p,k},y_{p,k}))$, the rate at which the coordinate $(p,k)$ is increased by $(-1,1)$ is $\mu_p$ if $x_{p,k}>0$ (a ribosome is detached and a protein is created). Similarly, the coordinate is decreased by $(0,1)$ at rate $\gamma_py_{p,k}$. The jumps concerning the $(x_{p,k})$-coordinates only are the same as before. 

This class of models is  related to stochastic models of communication networks like loss networks.  See Kelly~\cite{Kelly:04} for example. The difference here is that the analogue of the ``input rate'', i.e. the rate at which free ribosomes binds to mRNAs, depends on the state of the process through the variable $r(x)$.  The invariant distribution has also a product form expression involving  Poisson random variables  instead of geometric random variables.  Another class of related models are the Gordon-Newel networks where a fixed number of customers/ribosomes travel through the different nodes of the network with some routing mechanism.  See Gordon and Newel~\cite{Gordon}. In a quite different biological context, a similar analogy with queueing networks has been used in Mather et al.~\cite{Mather2}. 

The following proposition shows that the Markov process describing the state of the ribosomes has an explicit invariant distribution together with a reversibility property.  See Kelly~\cite{Kelly} for a general presentation of such stochastic models.
\begin{proposition}\label{equi-prop}
The Markov process $(X(t))$ is reversible and its invariant distribution $\pi$ is given by, for $x=(x_{p,k})\in{\cal S}$, 
\begin{equation}\label{pi}
\pi(x)=\frac{1}{Z} \frac{1}{r(x)!}\prod_{p=1}^P\prod_{k=1}^{K_p} \rho_k^{x_{p,k} }
\end{equation}
%\frac{C_k^N!}{(C_k^N-x_k)!x_k!}
where, for $1\leq p\leq P$,  $\rho_p=\lambda_p/\mu_p$, 
\[
r(x)=N- \sum_{p=1}^P \sum_{k=1}^{K_p} x_{p,k}
\]
 and $Z$ is the normalization constant. 
\end{proposition}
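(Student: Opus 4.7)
The plan is to prove reversibility directly by verifying the detailed balance equations with respect to the candidate distribution $\pi$; irreducibility of $(X(t))$ on the finite set $\cal S$ then guarantees that $\pi$, once shown invariant, is the unique invariant distribution. Since the process is a finite state Markov chain, detailed balance is equivalent to invariance together with reversibility, and it is by far the cleanest route for a product form candidate of this type.

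From the $Q$-matrix in Relation~\eqref{QMat}, the only pairs $(x,y)$ with $q(x,y)+q(y,x)>0$ are of the form $y=x+e_{p,k}$ with $x_{p,k}<C_p$ and $r(x)\geq 1$. I therefore only need to verify, for each such pair,
\[
\pi(x)\,q(x,x+e_{p,k})=\pi(x+e_{p,k})\,q(x+e_{p,k},x).
\]
Using $r(x+e_{p,k})=r(x)-1$ and the fact that going from $x$ to $x+e_{p,k}$ only changes the factor $\rho_p^{x_{p,k}}$ into $\rho_p^{x_{p,k}+1}$, the left-hand side equals
\[
\frac{1}{Z}\,\frac{1}{r(x)!}\prod_{p',k'}\rho_{p'}^{x_{p',k'}}\cdot \lambda_p\,r(x)
=\frac{\lambda_p}{Z\,(r(x)-1)!}\prod_{p',k'}\rho_{p'}^{x_{p',k'}},
\]
while the right-hand side equals
\[
\frac{1}{Z}\,\frac{1}{(r(x)-1)!}\,\rho_p\prod_{p',k'}\rho_{p'}^{x_{p',k'}}\cdot \mu_p
=\frac{\mu_p\rho_p}{Z\,(r(x)-1)!}\prod_{p',k'}\rho_{p'}^{x_{p',k'}}.
\]
These are equal precisely because $\rho_p=\lambda_p/\mu_p$, i.e.\ $\mu_p\rho_p=\lambda_p$. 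This identity is the only algebraic content of the proof; everything else follows from bookkeeping of the two factors that change between $x$ and $x+e_{p,k}$.

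Having established the detailed balance equations for every pair of states connected by a transition, one concludes that $(X(t))$ is reversible with respect to $\pi$, and in particular $\pi$ is an invariant measure. Dividing by the normalization constant
\[
Z=\sum_{x\in\cal S}\frac{1}{r(x)!}\prod_{p=1}^P\prod_{k=1}^{K_p}\rho_p^{x_{p,k}},
\]
which is finite since $\cal S$ is finite, yields the invariant probability distribution stated in Relation~\eqref{pi}.

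There is really no substantial obstacle here: the main point to notice is that the unusual factor $1/r(x)!$ is precisely what is needed to absorb the state-dependent binding rate $\lambda_p r(x)$ in the detailed balance equations, and that the reversibility is driven by the elementary balance $\mu_p\rho_p=\lambda_p$ local to each $(p,k)$-coordinate. The structural analogy with loss networks or Gordon--Newell networks, mentioned in the text, is exactly what suggests this product form ansatz in the first place.
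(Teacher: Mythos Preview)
Your proof is correct and follows exactly the same approach as the paper: verifying the detailed balance relation $\pi(x)\lambda_p r(x)=\pi(x+e_{p,k})\mu_p$ for each admissible transition. You have simply written out the ``straightforward verification'' that the paper leaves to the reader.
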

When  $(X(t))$ is in state $x=(x_{p,k})$,   the quantity
\[
\mu_p (x_{p,1}+x_{p,2}+\cdots+x_{p,K_p})
\]
is the total rate of production of proteins with the $p$th concentration and 
$r(x)$ is the number of free ribosomes. 
\begin{proof}
The proof is fairly simple, one has only to check, see Kelly~\cite{Kelly} for example, that, for any $1\leq p\leq P$ and $1\leq k\leq K_p$, the relation 
\begin{align*}
\pi(x)q(x,x+e_{p,k})&=\pi(x+e_{p,k})q(x+e_{p,k},x), \\
\intertext{ holds, i.e. that } 
\pi(x)\lambda_pr(x)
&=\pi(x+e_{p,k})\mu_p,
\end{align*}
for any $x\in {\cal S}$ such that $x+e_{p,k}\in {\cal S}$. The verification is straightforward. 
\end{proof}

The following corollary is a direct consequence of the above proposition, it expresses the invariant distribution of the number of free ribosomes in terms of geometric random variables. This representation will be useful to derive asymptotic results in Section~\ref{Equi}.
\begin{corollary}\label{propR}[Distribution of the Number of Free Ribosomes at Equilibrium]
If, for $1\leq p\leq P$ and $1\leq k\leq K_p$, $(G_{p,k})$ is an i.i.d. sequence of geometric random variables with parameter $\rho_p$ on the state space $\{0,\ldots,C_p\}$, then the distribution of the number $R$ of free ribosomes at equilibrium is given by, for $0\leq n\leq N$,
\begin{equation}\label{Free}
\P(R=n)=\frac{1}{Z_R} \frac{1}{n!}\P\left(\sum_{p=1}^P\sum_{k=1}^{K_p}G_{p,k}=N-n\right),
\end{equation}
where $Z_R$ is the normalization constant. 
\end{corollary}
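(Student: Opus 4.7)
The plan is to integrate out the vector $(X_{p,k})$ from the invariant distribution $\pi$ given in Proposition~\ref{equi-prop}, keeping only the information $R=N-\sum_{p,k} X_{p,k}$, and then recognize the resulting combinatorial sum as a convolution probability for an independent family of truncated geometric variables.

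First, I would write
\[
\P(R=n)=\sum_{x\in{\cal S}:\,r(x)=n}\pi(x)
=\frac{1}{Z}\cdot\frac{1}{n!}\sum_{\substack{0\leq x_{p,k}\leq C_p\\ \sum_{p,k}x_{p,k}=N-n}}\prod_{p=1}^{P}\prod_{k=1}^{K_p}\rho_p^{x_{p,k}},
\]
the factor $1/r(x)!=1/n!$ pulling out of the sum because $r(x)$ is constant on the summation set.

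Next, let $Z_p=\sum_{j=0}^{C_p}\rho_p^j$ so that the truncated geometric law of $G_{p,k}$ on $\{0,\ldots,C_p\}$ is $\P(G_{p,k}=j)=\rho_p^j/Z_p$. By independence of the $G_{p,k}$,
\[
\P\!\left(\sum_{p=1}^P\sum_{k=1}^{K_p}G_{p,k}=N-n\right)=\left(\prod_{p=1}^{P}Z_p^{-K_p}\right)\sum_{\substack{0\leq x_{p,k}\leq C_p\\ \sum_{p,k}x_{p,k}=N-n}}\prod_{p=1}^P\prod_{k=1}^{K_p}\rho_p^{x_{p,k}}.
\]
Substituting this identity into the previous display,
\[
\P(R=n)=\frac{1}{Z}\prod_{p=1}^{P}Z_p^{K_p}\cdot\frac{1}{n!}\,\P\!\left(\sum_{p,k}G_{p,k}=N-n\right).
\]
Setting $Z_R=Z\big/\prod_{p}Z_p^{K_p}$ yields Relation~\eqref{Free}. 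The constant $Z_R$ is then characterized by the requirement that $\P(R=\cdot)$ sums to one over $0\leq n\leq N$.

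There is no real obstacle here: the proof is a direct marginalization of the product-form measure supplied by Proposition~\ref{equi-prop}, followed by the observation that the constrained sum $\sum\prod\rho_p^{x_{p,k}}$ over $\{\sum x_{p,k}=N-n,\,0\leq x_{p,k}\leq C_p\}$ is, up to the explicit normalizer $\prod_p Z_p^{K_p}$, precisely the mass function of the convolution of the $G_{p,k}$'s evaluated at $N-n$. This representation is exactly what will make the local-limit/Edgeworth-type arguments announced in Section~\ref{Equi} applicable.
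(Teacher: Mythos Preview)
Your argument is correct and is exactly the marginalization the paper has in mind; the paper simply states the corollary as a ``direct consequence'' of Proposition~\ref{equi-prop} without spelling out the computation, and your steps fill in precisely those details.
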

\begin{comment}
The above proposition gives also an explicit expression for the mean number of proteins at equilibrium. This is a simple consequence of the equality of input and output rates at equilibrium.
\begin{corollary}\label{corol}[Mean Number of Proteins at Equilibrium]
The average number of proteins of class $(p,k)$ is given by
\begin{equation}\label{EY}
\E(Y_{p,k})=\frac{\mu_p}{\gamma_p} \E_{\pi}(X_{p,k})
\end{equation}
where $\pi$ is defined on ${\cal S}$ by Relation~\eqref{pi}. 
\end{corollary}
\end{comment}
The expression of the invariant distribution seems satisfactory at first sight since it gives an explicit representation of the equilibrium of the system.  However, because of the large size of the state space ${\cal S}$, it is hard to estimate the partition function $Z$ and, consequently, the same is true for interesting characteristics  of the system such as the average number of a given type of protein.
  In the following  a scaling approach to this model is considered, it consists in letting the number of free ribosomes $N$ and the number of classes of proteins go to infinity. As a consequence,  a qualitative description of the system is then possible, in particular to derive the rate of production of proteins.

This is also a common  property  shared with some stochastic models of  communication networks, known as loss networks, where the explicit expression of the invariant distribution is known but of little use in practice  for  networks with a reasonable number of nodes.   In this setting, a fruitful approach is to consider a scaling of the system by speeding up the input parameters by a large factor $N$ and analyze a rescaled version of the state of the network when $N$ is large.  See Kelly~\cite{Kelly:04}.

A scaling approach is also used in the following by taking advantage of the fact
that a large number of different types of proteins is produced in any condition within
cells. 
\section{Asymptotic Behavior of Equilibrium}\label{Equi}
From now on,  it is assumed that the number $N$ of ribosomes is large and that the number of classes of proteins with a fixed concentration $p$, $1\leq p\leq P$, scales with $N$ in the following way, 
\begin{equation}\label{beta}
K_p^N=N\beta_p+o\left(\sqrt{N}\right),
\end{equation}
in particular the sequence $(K_p^N/N)$ converges to $\beta_p$. 

An index N is added to the various quantities related to the production of proteins when in the system are present N ribosomes. Note that for $1\leq p \leq P$, $1 \leq j \leq K_p^N$ and any $N$, the number of ribosomes that can be attached to an mRNA of class $(p,k)$ is still bounded by the quantity $C_p$.

Proposition~\ref{equi-prop} shows that, at equilibrium,  the state of the process can be expressed  in terms of  independent geometric random variables $(G_{p,k})$   with parameter $\rho_p$ restricted to the state space $\{0,1,\ldots, C_p\}$ and a Poisson random variable $P_1$ with parameter $1$ so that,
\begin{multline}
\left(\left(X_{p,k}(t), 1\leq k\leq K_p^N\right), 1\leq p\leq P\right)\\ \stackrel{\text{def.}}{=}
\left(\left(\left(G_{p,k}, 1\leq k\leq K_p^N\right), 1\leq p\leq P\right)\left| P_1+\sum_{p=1}^P\sum_{k=1}^{K_p^N} G_{p,k}=N\right)\right.
\end{multline}
Ribosomes are  a costly resource in the cell,  a common biological assumption is that their number is smaller than the maximum number of places on mRNAs that can accommodate ribosomes. This suggests that the relation 
\[
\sum_{p=1}^P \sum_{k=1}^{K_p^N} \E(G_{p,k})=\sum_{p=1}^P K_p^N \E(G_{p,1}) > N
\]
holds.  i.e. that
\[
\sum_{p=1}^P K_p^N \rho_p\frac{C_p\rho_p^{C_p+1}-(C_p+1)\rho_p^{C_p}+1}{(1-\rho_p)(1-\rho_p^{C_p+1})}>N
\]
is satisfied. In view of the scaling condition~\eqref{beta}, this gives a definition of saturation in the asymptotic regime. 
\begin{definition}
The saturation condition holds if the following relation is satisfied
\begin{equation}\label{sat2}
\sum_{p=1}^P \beta_p\rho_p\frac{C_p\rho_p^{C_p+1}-(C_p+1)\rho_p^{C_p}+1}{(1-\rho_p)(1-\rho_p^{C_p+1})}>1.
\end{equation}
\end{definition}

\subsection{The System under the Saturation Condition}
The following theorem is the key result of this paper, the asymptotic distribution of the number of free ribosomes at equilibrium is obtained in this limiting regime. 
\begin{theorem}\label{theo1}
Under the limiting regime~\eqref{beta} with the saturation condition~\eqref{sat2}, as $N$ goes to infinity, at equilibrium the number of free ribosomes converges in distribution to a Poisson random variable with parameter $\delta(\underline{C})$ where $\delta(\underline{C})<1$ is the unique non-negative solution $y$ of the equation 
\begin{equation}\label{gamma}
\sum_{p=1}^P \beta_p \rho_py\frac{C_p\rho_p^{C_p+1}y^{C_p+1}-(C_p+1)\rho_p^{C_p}y^{C_p}+1}{(1-\rho_py)(1-\rho_p^{C_p+1}y^{C_p+1})}=1,
\end{equation}
$\underline{C}=(C_p)$ and $\rho_p=\lambda_p/\mu_p$. 
\end{theorem}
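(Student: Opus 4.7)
The plan is to start from the explicit expression for $\P(R=n)$ given in Corollary~\ref{propR}, to apply an exponential change of measure (tilting) on the truncated geometric variables so that the conditioning value $N-n$ falls near the mean of the tilted sum, and then to invoke a local limit theorem to read off the Poisson shape.

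For $y\in(0,1)$, introduce the tilted truncated geometric $G_{p,k}^{(y)}$ with mass function $(\rho_p y)^j/Z_p(y)$ on $\{0,\ldots,C_p\}$, where $Z_p(y)=(1-(\rho_p y)^{C_p+1})/(1-\rho_p y)$. A direct differentiation gives
\[
\E\!\left(G_{p,k}^{(y)}\right)=\rho_p y\,\frac{C_p\rho_p^{C_p+1}y^{C_p+1}-(C_p+1)\rho_p^{C_p}y^{C_p}+1}{(1-\rho_p y)(1-\rho_p^{C_p+1}y^{C_p+1})},
\]
so the fixed point equation~\eqref{gamma} is exactly $\sum_p\beta_p\,\E(G_{p,1}^{(y)})=1$. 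Since the left-hand side is continuous, strictly increasing in $y\in[0,1)$, vanishes at $y=0$, and, by the saturation condition~\eqref{sat2}, strictly exceeds $1$ at $y=1$, the intermediate value theorem gives existence and uniqueness of $\delta=\delta(\underline{C})\in(0,1)$.

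Next I would pick $y_N\in(0,1)$ so that $\sum_p K_p^N\,\E(G_{p,1}^{(y_N)})=N$, which together with~\eqref{beta} forces $y_N\to\delta$. Computing the Radon--Nikodym derivative of the untilted with respect to the tilted product law of the $G_{p,k}$ and plugging into Corollary~\ref{propR} yields, for every $n\ge 0$,
\[
\P(R=n)=\frac{A_N}{Z_R}\cdot\frac{y_N^n}{n!}\,\P^{(y_N)}\!\left(\Sigma_N^{(y_N)}=N-n\right),\qquad \Sigma_N^{(y_N)}:=\sum_{p=1}^P\sum_{k=1}^{K_p^N}G_{p,k}^{(y_N)},
\]
where $A_N$ collects the normalizing constants $(Z_p(y_N)/Z_p(1))^{K_p^N}\,y_N^{-N}$ and does not depend on $n$. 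Taking ratios against $n=0$ removes $A_N/Z_R$, so everything reduces to the behaviour of the tilted local probability.

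The heart of the argument is the local limit theorem for $\Sigma_N^{(y_N)}$: its summands are independent, bounded, integer-valued with lattice span $1$, its mean is $N$, and its variance $\sigma_N^2$ is of order $N$, so one expects
\[
\P^{(y_N)}\!\left(\Sigma_N^{(y_N)}=N-n\right)=\frac{1+o(1)}{\sigma_N\sqrt{2\pi}},
\]
uniformly for $n$ in any bounded set. Combined with the previous display this gives $\P(R=n)/\P(R=0)\to\delta^n/n!$, and normalization forces the Poisson$(\delta)$ limit. The main obstacle is precisely this local limit theorem in the present triangular-array setting, where the variables are not identically distributed across $p$ and the tilt parameter $y_N$ itself depends on $N$. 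Following the Dobrushin--Tirozzi~\cite{Dobrushin} strategy referenced in the introduction, the pointwise LLT can be avoided by working with an integrated (Edgeworth-type) estimate on the characteristic function of $\Sigma_N^{(y_N)}$, which is robust to both the non-identical distribution and the $N$-dependence.
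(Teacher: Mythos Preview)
Your proposal is correct and follows essentially the same strategy as the paper: start from Corollary~\ref{propR}, apply an exponential tilt so that the conditioning value becomes the mean of the tilted sum (your $y_N$ is the paper's $e^{-\theta_N}$, and your centering condition is exactly Lemma~\ref{lemma1}), and then use a refined CLT-type estimate on the tilted sum to read off the Poisson shape.

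The only notable difference is in how the limit step is implemented. You propose going straight for the point probability $\P^{(y_N)}(\Sigma_N^{(y_N)}=N-n)$ via a local limit theorem and then taking ratios against $n=0$; the paper instead passes through \emph{tail} probabilities using the identity
\[
\P(S_N\ge \ell)=\phi_N(\theta_N)\int_0^{\infty}\theta_N e^{-\theta_N u}\,\P\!\left(\ell\le S_N^{\theta_N}\le u\right)\diff u,
\]
and controls the integrand with the Edgeworth expansion for lattice triangular arrays due to Pipiras (Proposition~\ref{Pipi}). This yields $\P(S_N\ge\ell)/\P(S_N\ge 0)\to e^{-\ell\theta_\infty}$ \emph{uniformly in $\ell\ge 0$}, which is precisely the uniformity your ``normalization forces the Poisson limit'' step needs (pointwise convergence for bounded $n$ alone does not let you interchange the limit with the sum over $n$). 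Your closing remark that one should replace the pointwise LLT by an integrated Edgeworth-type estimate is exactly this move; the paper's choice of the Pipiras result is what handles the two issues you flagged, namely non-identical summands across $p$ and the $N$-dependence of the tilt.
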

The basic ingredients of the proof are a change of probability distribution and a  strengthened version of the central limit theorem for lattice distributions, known as Edgeworth expansion.  The method turns out to be a probabilistic analogue of a saddle point method applied to the series
\[
\sum_{\substack{x=(x_{p,k})\in{\cal S}\\r(x)=n}}\; \prod_{p=1}^P\prod_{k=1}^{K_p^N}  \rho_p^{x_{p,k} }
\]
See Flajolet and Sedgewick~\cite[chapter VIII]{Flajolet} for example.  A related result, although of quite different nature,  holds in congested communication networks. The analogue of Equation~\eqref{gamma} is the {\em Erlang Fixed Point Equation}, see Kelly~\cite{Kelly:14}. It is obtained through an optimization formulation which is in fact a saddle point method.   One of the main advantages of the probabilistic formulation is that the technical framework is significantly lighter. 

For $N\geq 1$, with the same notations as before for the random variables $(G_{p,k})$, denote by
\[
S_N=N-\sum_{p=1}^P\sum_{k=1}^{K_p^N}G_{p,k}.
\]
From Equation~\eqref{Free} of Corollary~\ref{propR}, the distribution of $R_N$ the number of free ribosomes can be expressed as 
\begin{equation}\label{eqFree}
\P(R_N=\ell) =\frac{1}{Z_{R_N}}\frac{1}{\ell!} \P(S_N=\ell),
\end{equation}
for $\ell\geq 0$ where $Z_{R_N}$ is the normalization constant. 

To derive the asymptotic behavior of the sequence $(\P(S_N=\ell))$ one introduces, for $\theta>0$, the random variable  ${S}^\theta_N$ whose distribution is defined by, for a non-negative function $f$ on $\N$, 
\[
\E\left(f\left({S}^{\theta}_N\right)\right)=
\frac{1}{\phi_N(\theta)}\E\left(f\left(S_N\right)e^{\theta S_N}\right),
\]
with $\phi_N(\theta)=\E\left(\exp(\theta S_N)\right)$. For $u\in(0,1)$ 
\[
\E\left(u^{{S}^{\theta}_N}\right)= \frac{1}{\phi_N(\theta)}\E\left(\left(ue^\theta\right)^{S_N}\right)=\frac{u^Ne^{N\theta}}{\phi_N(\theta)}\prod_{p=1}^P\left(\frac{1-(u^{-1}\rho_pe^{-\theta})^{C_p+1}}{1-u^{-1}\rho_pe^{-\theta}}\right)^{K_p^N}.
\]
In particular ${S}^\theta$ can be represented as a sum of independent random variables
\[
{S}^\theta_N=N-\sum_{p=1}^P\sum_{k=1}^{K_p^N}G^\theta_{p,k},
\]
where, for $1\leq p\leq P$,  $(G^{\theta}_{p,k},k\geq 0)$ is a sequence of geometric random variables  on $\{0,1,\ldots,C_p\}$ with parameter $\rho_p e^{-\theta}$.

By definition of $S^\theta_N$, for $n\geq 0$,
\begin{multline}\label{eqdim}
\P(S_N\geq \ell)=\phi_N(\theta)\E\left(\ind{S^\theta_N\geq \ell}e^{-\theta S^\theta_N}\right)\\=
\phi_N(\theta)\int_0^{+\infty} \theta e^{-\theta u} \P\left(\ell\leq S^\theta_N\leq u\right)\,\diff u.
\end{multline}
To study  the asymptotic behavior of the distribution of $R_N$,  Equation~\eqref{eqFree} shows that it is enough to investigate the asymptotic distribution of $S_N$. As it will be seen, one will  choose a convenient $\theta$ for which the asymptotic distribution of $S_N^\theta$ can be estimated and  Relation~\eqref{eqdim}  will give the corresponding result for the asymptotic distribution of $S_N$. 

\begin{lemma}\label{lemma1}
For   $N\geq 1$ sufficiently large, there exists a unique $\theta_N$ such that $\E(S^{\theta_N}_N)=0$. The sequence $(\theta_N)$ is converging to $\theta_\infty=-\log\delta(\underline{C})$, where $\delta(\underline{C})$ is the unique non-negative solution of Equation~\eqref{gamma}, furthermore
\[
\E((S_N^{\theta_N})^2)= \sigma^2 N+O\left(\sqrt{N}\right) \text{ with } 
\sigma_\infty=\sqrt{\sum_{p=1}^P \beta_p \var(H_p)},
\]
where, for $1\leq p\leq P$,  $H_p$ is a random variable with  a geometric distribution with parameter $y_p$ which is  truncated at $C_p$ with $y_p=\rho_p\delta(\underline{C})$. 
%r^CC^2-2\,r^{C+1}C^2+r^{C+2}C^{2}-r^CC+4r^{C+1}C-3r^{C+2}C-2r^2+2r^{C+2}
\end{lemma}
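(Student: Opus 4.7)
The plan is to view $\theta \mapsto F_N(\theta) := \E(S_N^\theta)/N$ as a smooth strictly monotone function and apply the intermediate value theorem. Writing $F_N(\theta) = 1 - (1/N)\sum_{p=1}^P K_p^N\, m_p(\rho_p e^{-\theta})$, where $m_p(y)$ is the mean of a truncated geometric on $\{0,\dots,C_p\}$ with parameter $y$, one checks that $y \mapsto m_p(y)$ is smooth and strictly increasing on $(0,\infty)$, so $F_N$ is smooth and strictly increasing in $\theta$. As $\theta \to +\infty$, $F_N(\theta) \to 1$; at $\theta = 0$, $F_N(0) \to 1 - \sum_p \beta_p m_p(\rho_p)$, which is strictly negative by the saturation condition \eqref{sat2}. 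The IVT therefore yields a unique $\theta_N$ with $F_N(\theta_N) = 0$ for $N$ large enough. Taking pointwise limits identifies $\theta_\infty$: with $y = e^{-\theta_\infty}$ the equation $F_\infty(\theta_\infty) = 0$ is precisely \eqref{gamma}, so $\theta_\infty = -\log \delta(\underline{C})$, and $\delta(\underline{C}) < 1$ follows from the monotonicity of the left-hand side of \eqref{gamma} in $y$ evaluated at $y = 1$.

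To obtain the rate needed for the variance estimate, I would use $F_\infty(\theta_\infty) = 0$ together with \eqref{beta} to compute
\[
F_N(\theta_\infty) = \frac{1}{N}\sum_{p=1}^P (K_p^N - N\beta_p)\, m_p(\rho_p e^{-\theta_\infty}) = o\!\left(\frac{1}{\sqrt{N}}\right).
\]
Since $F_N'(\theta_\infty) \to F_\infty'(\theta_\infty) > 0$, a first-order Taylor expansion of $F_N$ about $\theta_\infty$ combined with $F_N(\theta_N) = 0$ yields $|\theta_N - \theta_\infty| = o(1/\sqrt{N})$; this also proves $\theta_N \to \theta_\infty$.

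For the second moment, independence and $\E(S_N^{\theta_N}) = 0$ give
\[
\E\bigl((S_N^{\theta_N})^2\bigr) = \var(S_N^{\theta_N}) = \sum_{p=1}^P K_p^N\, v_p(\rho_p e^{-\theta_N}),
\]
where $v_p(y) := \var(H_p(y))$. Decomposing this sum as
\[
N\sum_p \beta_p v_p(\rho_p \delta(\underline{C})) + \sum_p (K_p^N - N\beta_p) v_p(\rho_p e^{-\theta_N}) + \sum_p K_p^N \bigl(v_p(\rho_p e^{-\theta_N}) - v_p(\rho_p \delta(\underline{C}))\bigr),
\]
the first term equals $\sigma_\infty^2 N$, the second is $o(\sqrt{N})$ by \eqref{beta}, and the third is bounded by a constant times $N \cdot |\theta_N - \theta_\infty| = o(\sqrt{N})$ by the smoothness of $v_p$ and the rate established above.

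The only real obstacle is the bookkeeping in this last step: the $O(\sqrt{N})$ remainder in the statement is a sharp combined consequence of the $o(\sqrt{N})$ in hypothesis \eqref{beta} and of the rate $|\theta_N - \theta_\infty| = o(1/\sqrt{N})$. Either of these alone would yield only an $o(N)$ remainder, which would not suffice.
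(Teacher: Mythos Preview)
Your argument is correct and follows essentially the same route as the paper: existence and uniqueness of $\theta_N$ via monotonicity (the paper phrases this as strict convexity of $\phi_N$, which is equivalent since $(\log\phi_N)'(\theta)=\E(S_N^\theta)$), then identification of the limit through the fixed point equation~\eqref{gamma}, with the variance expansion left to ``direct, straightforward, calculations'' in the paper. Your write-up is in fact more explicit than the paper's: you isolate the rate $|\theta_N-\theta_\infty|=o(1/\sqrt{N})$, which is exactly what is needed to push the variance error down to $O(\sqrt{N})$, and you spell out the three-term decomposition of the variance. Two small points: the displayed formula for $F_N(\theta_\infty)$ is off by a sign (it should be $-\tfrac{1}{N}\sum_p(K_p^N-N\beta_p)m_p(\rho_p e^{-\theta_\infty})$), which does not affect the $o(1/\sqrt{N})$ conclusion; and to make the Taylor step fully rigorous you should first confine $\theta_N$ to a fixed compact interval (e.g.\ by noting $F_N(\theta_\infty+\varepsilon)\to F_\infty(\theta_\infty+\varepsilon)>0$) so that $F_N'$ at the intermediate point is bounded below uniformly in $N$.
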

\begin{proof}
The function $\theta\mapsto \phi_N(\theta)$ is strictly convex with $\phi_N(0)=1$ and
\begin{multline*}
\phi_N'(\theta)=\E(S_N^\theta)=N-\sum_{p=1}^PK_p^N\E(G_{p,1}^\theta)\\= N\left(1-
\sum_{p=1}^P \beta_p \rho_pe^{-\theta}\frac{C_p\rho_p^{C_p+1}e^{-\theta(C_p+1)}-(C_p+1)\rho_p^{C_p}e^{-\theta C_p}+1}{(1-\rho_pe^{-\theta})(1-\rho_p^{C_p+1}e^{-\theta(C_p+1)})}\right)+o(\sqrt{N}).
\end{multline*}
The saturation condition~\eqref{sat2} shows that for $N$ sufficiently large, $\phi_N'(0)<0$,  since $\phi_N(\theta)$ converges to infinity when $\theta$ gets large. As a consequence, there exists a unique $\theta_N>0$ such that $\phi'(\theta_N)=0$, i.e. such that $\E(S^{\theta_N}_N)=0$ which gives
\[
N-\sum_{p=1}^P K_p^N \rho_pe^{-\theta_N}\frac{C_p\rho_p^{C_p+1}e^{-\theta_N(C_p+1)}-(C_p+1)\rho_p^{C_p}e^{-\theta_NC_p}+1}{(1-\rho_pe^{-\theta_N})(1-\rho_p^{C_p+1}e^{-\theta_N(C_p+1)})}=0.
\]
Hence, with the above expansion, one gets
\[
1-\sum_{p=1}^P\beta_p \rho_pe^{-\theta_N}\frac{C_p\rho_p^{C_p+1}e^{-\theta_N(C_p+1)}-(C_p+1)\rho_p^{C_p}e^{-\theta_NC_p}+1}{(1-\rho_pe^{-\theta_N})(1-\rho_p^{C_p+1}e^{-\theta_N(C_p+1)})}=o(1/\sqrt{N}),
\]
this shows that, for $N$ sufficiently large, $\exp(-\theta_N)$ is in any arbitrary small neighborhood of $\delta(C)$. This gives the desired convergence of $(\theta_N)$. The other expansion is done with direct, straightforward, calculations. 
\end{proof}

\begin{proof}[Proof of Theorem~\ref{theo1}]
The strategy of the proof is as follows, one derives a uniform asymptotic expansion as $N$ goes to infinity of 
\begin{equation}\label{rap}
\frac{\P(S_N\geq \ell)}{ \phi_N(\theta)}=\int_0^{+\infty} \theta e^{-\theta u} \P\left(\ell \leq S^\theta_N\leq u\right)\,\diff u, \quad \forall \ell\geq 0,
\end{equation}
and, therefore, of ${\P(S_N=\ell)}/{\P(S_N\geq 0)}$. Then an asymptotic expansion of the normalization constant $Z_{R_N}$  of Representation~\eqref{eqFree} is obtained and therefore of the distribution of $R_N$. 

Let $\sigma_N=\sqrt{{\var(S_N^\theta)}/{N}}$, by Lemma~\ref{lemma1} the sequence $(\sigma_N)$ is converging to $\sigma_\infty$. Relation~\eqref{BoundApp} gives the bound 
\begin{multline}\label{Ams}
\sup_{ 0\leq v\leq u} \left|\P\left(v\leq S^{\theta_N}_N\leq u\right)-\frac{1}{\sqrt{2\pi }}
\int_{v/(\sigma_N\sqrt{N})}^{u/(\sigma_N\sqrt{N})} e^{-v^2/2}\,\diff v \right. \\\left.
+ \frac{1}{\sqrt{2\pi}}\left( R_N\left(u/(\sigma_N\sqrt{N})\right)e^{-u^2/(N\sigma_N^2)}- R_N\left(v/(\sigma_N\sqrt{N})\right)e^{-v^2/(N\sigma_N^2)}\right)\right|
\\= o\left(\frac{1}{\sqrt{N}}\right)
\end{multline}

By plugging this {\em uniform} estimation in the integral of the right hand side of Equation~\eqref{eqdim} with $\theta=\theta_N$,
one gets that
\begin{align}
\frac{\P(S_N> \ell)}{ \phi_N(\theta)}&=\int_0^{+\infty} \theta_N e^{-\theta_N u} \P\left(\ell < S^{\theta_N}_N\leq u\right)\,\diff u\notag \\
&= \frac{1}{\sqrt{2\pi}}\int_{0}^{+\infty}
  \int^{u/(\sigma_N\sqrt{N})}_{\ell /(\sigma_N\sqrt{N})}e^{-v^2/2}\,\diff v \,\theta_N e^{-\theta_N u} \,\diff u +o\left(\frac{1}{\sqrt{N}}\right)\notag \\
&= \frac{e^{-\theta_N\ell}}{\sigma_N\sqrt{2\pi N}}\int_{0}^{+\infty}
  \int^{u+\ell}_{\ell}e^{-v^2/\left(2\sigma_N^2 N\right)}\,\diff v \,\theta_N e^{-\theta_N u} \,\diff u +o\left(\frac{1}{\sqrt{N}}\right)
\notag \\ &=\frac{1}{\theta_\infty\sigma_\infty\sqrt{2\pi N}}e^{-\theta_\infty \ell}+o\left(\frac{1}{\sqrt{N}}\right),\label{Berry}
\end{align}
 {\em uniformly} with respect to $\ell \geq 0$.
Indeed,  for $i=1$, $2$, the  polynomials $Q_{in}$ in the definition of $R_N$, see Proposition~\ref{Pipi}, have bounded degree and bounded coefficients and therefore that $x\mapsto Q_{in}(x)\exp(-x^2)$ is a bounded function in $n$ and $x\geq 0$. Hence the term
\[
R_N\left(u/(\sigma_N\sqrt{N})\right)e^{-u^2/(N\sigma_N^2)}
\]
is uniformly  $o(1/\sqrt{N})$ because for $i=1$, $2$,  the  coefficients of the $Q_{iN}(\cdot)$ and of the periodic functions  $S_i(\cdot)$   are  combinations of $L_{\nu N}$, $\nu=3$, $4$  and $1/B_{2 N}^\nu$, $\nu=1/2$, $1$  which  are $o(1/\sqrt{N})$   except for the term $L_{3N}Q_{1N}$, but for this last term the fact that $Q_{1N}(0)=0$ gives nevertheless the appropriate estimate.  

Equation~\eqref{eqdim} gives therefore that
\[
\lim_{N\to+\infty} \sup_{\ell\geq 0} \left|\frac{\P(S_N\geq \ell)}{\P(S_N\geq 0)}-e^{-\ell\theta_\infty}\right|=0
\]
and consequently
\[
\lim_{N\to+\infty} \sup_{\ell\geq 1} \left|\frac{\P(S_N=\ell)}{\P(S_N\geq 0)}-e^{-\ell\theta_\infty}\left(1-e^{-\theta_\infty}\right)\right|=0.
\]
This uniform asymptotic result can then be plugged into the expression~\eqref{eqFree} of the distribution of the number of free ribosomes, by using the fact that the normalization constant $Z_{R_N}$ is 
\[
\sum_{k} \frac{1}{k!} \P(S_N=k).
\]
The limit result of Lemma~\ref{lemma1} concludes the proof of  the theorem.
\end{proof}

The fixed point equation~\eqref{gamma} has a simple intuitive explanation. First one notices that the number of free ribosomes evolves on a very rapid time scale of the order of $N$, since the number of attached ribosomes and the number of free places for ribosomes on mRNAs are of this order.  The number of free ribosomes is therefore quickly at equilibrium, let $\delta(\underline{C})$ be its average . As an approximation, for a given class $(p,k)$ of proteins, provided it is less than $C_p$ and positive,  the number of ribosomes attached to an mRNA increases by $1$ at rate $\delta(\underline{C})\lambda_p$ and decreases by $1$ at rate $\mu_p$.  At equilibrium,  one gets that the average number of ribosomes attached is
\[
\E(X_{p,k})=\rho_p\delta(\underline{C})\frac{C_p\rho_p^{C_p+1}\delta(\underline{C})^{C_p+1}-(C_p+1)\rho_p^{C_p}\delta(\underline{C})^{C_p}+1}{(1-\rho_p\delta(\underline{C}))(1-\rho_p^{C_p+1}\delta(\underline{C})^{C_p+1})},
\]
and the average total number of attached ribosomes is  then
\[
\sum_{p=1}^{P} K_p^N \E(X_{p,k}).
\]
This quantity must be of the order of $N$ (few free ribosomes)  which gives the desired fixed point equation for $\delta(\underline{C})$. 

\medskip

We now look at more detailed characteristics of the protein production process, namely the number of ribosomes attached to mRNAs of class $(p,k)$, $1\leq p\leq P$, $1\leq k\leq K_p^N$ at equilibrium. 
For $1\leq p\leq P$, since all the proteins with a fixed concentration have the same parameters, at equilibrium the random variables $X_{p,k}$, $1\leq k\leq K_p^N$ have the same distribution.  The following theorem gives a limit result for the distribution of these variables as $N$ gets large. 

\begin{theorem}\label{TheoEqui}
Under the limiting regime with the saturation condition~\eqref{sat2}, as $N$ goes to infinity,  at equilibrium 
\[
\lim_{N\to+\infty} (X^N_{p,1}, 1\leq p\leq P)= \left(Y_p, 1\leq p\leq P\right)
\]
for the convergence in distribution, where $Y_p$, $1\leq p\leq P$ are independent random variables and  $Y_p$ has   a truncated  geometric distribution   on $\{0,1,\ldots,C_p\}$ with parameter $\rho_p\delta(\underline{C})$ where $\delta(\underline{C})$ is the unique non-negative solution of Equation~\eqref{gamma}.
\end{theorem}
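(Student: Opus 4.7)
The plan is to reduce the joint law of $(X^N_{p,1})_{1\leq p\leq P}$ to a ratio of two partition functions, each of which is controlled by the asymptotics developed in the proof of Theorem~\ref{theo1}.

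First I would marginalize the explicit invariant distribution of Proposition~\ref{equi-prop} over the coordinates $(x_{p,k})_{k\geq 2}$, or equivalently condition in the representation $(X_{p,k})\stackrel{d}{=}(G_{p,k})\mid\{P_1+\sum G_{p,k}=N\}$, to obtain
\[
\P(X^N_{p,1}=i_p,\,1\leq p\leq P)=\prod_{p=1}^P\P(G_{p,1}=i_p)\cdot\frac{\P(P_1+T'_N=N-I)}{\P(P_1+T_N=N)},
\]
with $I=\sum_p i_p$, $T_N=\sum_{p,k}G_{p,k}$ and $T'_N=\sum_{p,k\geq 2}G_{p,k}$. Each probability in the ratio has the form $\sum_\ell\tfrac{e^{-1}}{\ell!}\P(S=\ell)$ for $S$ a free-ribosome statistic of the full system or of a slightly reduced one with $P$ fewer mRNAs and $N-I$ ribosomes.

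Next, since the sequence $K_p^N-1$ still satisfies~\eqref{beta} and~\eqref{sat2}, Lemma~\ref{lemma1} and the proof of Theorem~\ref{theo1} apply verbatim to $S'_{N,I}=(N-I)-T'_N$ with the same limiting tilting $\theta_\infty=-\log\delta(\underline{C})$ and the same limiting variance $\sigma_\infty^2$. The Edgeworth estimate~\eqref{Ams} then gives
\[
\frac{\P(S'_{N,I}=\ell)}{\P(S'_{N,I}\geq 0)}\longrightarrow e^{-\theta_\infty\ell}(1-e^{-\theta_\infty})
\]
uniformly in $\ell\geq 0$, and likewise for $S_N$. Summing against $\tfrac{e^{-1}}{\ell!}$ and using $\sum_\ell e^{-\theta_\infty\ell}/\ell!=e^{\delta(\underline{C})}$, the common leading constants cancel in the ratio, leaving only the ratio of tilting prefactors
\[
\frac{\phi'_{N,I}(\theta_N)}{\phi_N(\theta_N)}=e^{-\theta_N I}\prod_{p=1}^P\frac{Z_p(0)}{Z_p(\theta_N)}\longrightarrow e^{-\theta_\infty I}\prod_{p=1}^P\frac{Z_p(0)}{Z_p(\theta_\infty)},
\]
where $Z_p(\theta)=\sum_{j=0}^{C_p}(\rho_p e^{-\theta})^j$.

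Combining this with $\P(G_{p,1}=i_p)=\rho_p^{i_p}/Z_p(0)$ would give
\[
\P(X^N_{p,1}=i_p,\,1\leq p\leq P)\longrightarrow\prod_{p=1}^P\frac{(\rho_p\delta(\underline{C}))^{i_p}}{Z_p(\theta_\infty)},
\]
which is precisely the joint law of $P$ independent truncated geometric variables $Y_p$ on $\{0,\ldots,C_p\}$ with parameter $\rho_p\delta(\underline{C})$, the asserted independence dropping out automatically from the product structure. The main technical obstacle is to check that~\eqref{Ams} and the local-limit estimates of Lemma~\ref{lemma1} transfer to $S'_{N,I}$ with remainders uniform enough for the \emph{ratio} of the two partition functions—not merely each separately—to converge; since removing $P$ bounded truncated-geometric summands perturbs mean and variance only by $O(1)$ without affecting the lattice structure or the leading Edgeworth coefficients, this amounts essentially to a careful rerun of Lemma~\ref{lemma1} with $K_p^N$ replaced by $K_p^N-1$.
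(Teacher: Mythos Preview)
Your proposal is correct and follows essentially the same route as the paper: both arguments marginalize the product-form equilibrium over the coordinates with $k\geq 2$, invoke the exponential tilt $\theta_N$ of Lemma~\ref{lemma1}, and then use the uniform Edgeworth estimate~\eqref{Ams}/\eqref{Berry} on the reduced sum $\overline{S}_N$ (your $S'_{N,I}$), relying on the fact that removing $P$ bounded summands leaves $\theta_\infty$ and $\sigma_\infty$ unchanged. The only cosmetic difference is the order of operations---the paper tilts first (so the $Y_p=G^{\theta_N}_{p,1}$ appear directly with the target law) and then separates, whereas you separate the untilted $G_{p,1}$ first and recover the tilt from the ratio $\phi'_{N,I}(\theta_N)/\phi_N(\theta_N)=e^{-\theta_N I}\prod_p Z_p(0)/Z_p(\theta_N)$; the two computations are algebraically equivalent.
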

\begin{proof}
The notations of  the previous proof will be used. 
From Equation~\eqref{pi} for the equilibrium distribution,   one gets that for any non-negative function $f$ on $\N^P$,
\begin{equation}\label{Job}
\E\left(f\left(X^N_{p,1}, 1\leq p\leq P\right)\right)=\frac{1}{Z_N}
\sum_{n=0}^{+\infty}\frac{1}{n!}
\E\left(f\left(G_{1,1},\ldots,G_{p,1}\right)\ind{r((G_{p,k}))=n}\right),
\end{equation}
where, for $1\leq p\leq P)$ and $1\leq k\leq K_p^N$,  $G_{p,k}$  is a geometric random variable with parameter $\rho_p$ restricted to the state space $\{0,1,\ldots, C_p\}$,
and $Z_N$ is the normalization constant.  If $(m_p)\in\N^P$, it has been seen in the previous proof that
\begin{multline*}
\P((G_{p,1})=(m_p), r((G_{p,k}))=N)\\=\phi_N(\theta_N)\E\left(\ind{(G^{\theta_N}_{p,1})=(m_p), r((G^{\theta_N}_{p,k}))=n} e^{-\theta_N {S}^{\theta_N}_N}\right),
\end{multline*}
where, as before,  the random variables $(G^{\theta_N}_{p,k})$ are  truncated geometric random variables with parameter $\rho_p\exp(-\theta_N)$, similarly for $S_N^{\theta_N}$.
Consequently
\begin{multline*}
\E\left(\ind{(G^{\theta_N}_{p,1})=(m_p), r((G^{\theta_N}_{p,k}))=n} e^{-\theta_N S_N^{\theta_N}}\right)\\=
\P\left((Y_p)=(m_p)\right)\E\left(\ind{r((\overline{G}_{p,k}))=n+\|m\|} e^{-\theta_N (\overline{S}_N-\|m\|}\right),
\end{multline*}
where $\|m\|=m_1+\cdots+m_P$ and the bar $\overline{\ }$ indicates that all the components with an index $k=1$ in $(G_{p,k}^{\theta_N})$ are removed. 
From Estimation~\eqref{Berry}, one gets that
\[
\E\left(\ind{r((\overline{G}_{p,k}))=n+\|m\|} e^{-\theta_N (\overline{S}_N-\|m\|}\right)=
\frac{1}{\theta_N\sigma_N\sqrt{2\pi N}}e^{-\theta_N n}(1-e^{-\theta_N})+o\left(\frac{1}{\sqrt{N}}\right),
\]
holds uniformly in $n$ and $m$. By plugging this relation in Equation~\eqref{Job} where $f=\ind{m}$, one gets that
\[
\lim_{N\to+\infty} \P((X^N_{p,1})=(m_p))=\P\left((Y_p)=(m_p)\right). 
\]
The theorem is proved. 
\end{proof}
The above theorem is in fact a mean-field result. In the limit any finite subset of  components of $(X^N_{p,k})$  is converging in distribution to independent random variables. %The next section gives a more detailed presentation of this type of result when the system starts out of equilibrium.  See Sznitman~\cite{Sznitman:06} for example. 
\section{Analysis of the Fixed Point Equation}\label{BioSec}
This section is devoted  to the insight one can get on the consequences of Theorem~\ref{TheoEqui} in a biological context.  As it can be seen, the key parameter $\delta(\underline{C})$, the solution  of the  fixed point equation~\eqref{gamma} is expressed in terms of  quite complicated expressions involving the vectors of the loads $(\rho_p)$ and of the capacities $(C_p)$. A simple and tight approximation  of $\delta(\underline{C})$  is  presented. It is shown in the following that  for a wide range of values for the $\rho_p$'s, the solution  $\delta(\underline{C})$ depends in fact in a simple way on the vector of capacities  $\underline{C}=(C_p)$. 

The basic ingredient of the estimations of  $\delta(\underline{C})$ is the simple following fact. If $G$ is a random variable whose distribution is a  truncated geometric distribution with parameter $\rho$ on $\{0,1,\ldots, C\}$, if $\rho<1$ then, provided that $C$ is sufficiently large,  $G$ is close in distribution to a plain geometric distribution with parameter $\rho$, in particular its expected value is close to $\rho/(1-\rho)$. In the case where $\rho>1$, then $G$ can be expressed as $C-G_0$, where $G_0$  is a  truncated geometric distribution with parameter $1/\rho<1$.  Explicit bounds on the accuracy of the approximations  of $\delta(\underline{C})$ are provided. 

\subsection{The underloaded case} 
In this section, it will be assumed that all the loads $\rho_p$, $1\leq p\leq P$, are smaller than $1$. The main result is that the quantity $\delta(\underline{C})$ is essentially independent of $\underline{C}$. 
\begin{lemma}
Under the saturation condition~\eqref{sat2} and if  $\rho_p<1$,  for all   $1\leq p\leq P$ 
and if $\delta_0$ is the unique solution of the equation
\begin{equation}\label{appgamma}
\sum_{p=1}^P \beta_p \frac{\rho_p\delta_0}{1-\rho_p\delta_0}=1,
\end{equation}
then the solution  $\delta(\underline{C})$  of  the fixed point Equation~\eqref{gamma} is such that
$\delta_0\leq \delta(\underline{C})$ and 
\begin{equation}\label{approx}
\left|\delta(\underline{C})-\delta_0\right|\leq  \frac{1}{\sum_{p=1}^P \beta_p\rho_p}\sum_{p=1}^P \beta_p\frac{(C_p+1)\rho_p^{C_p+1}}{1-\rho_p^{C_p}}
\end{equation}
with $\underline{C}=(C_p)$. 
\end{lemma}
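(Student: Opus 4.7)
The plan is to exploit the probabilistic interpretation of the left hand side of Equation~\eqref{gamma}. Writing $r=\rho_p y$, the formula inside the sum in~\eqref{gamma} is exactly the mean of a truncated geometric variable on $\{0,1,\ldots,C_p\}$ with parameter $\rho_p y$, and it admits the convenient decomposition
\[
\rho_p y \,\frac{C_p \rho_p^{C_p+1}y^{C_p+1}-(C_p+1)\rho_p^{C_p}y^{C_p}+1}{(1-\rho_p y)(1-\rho_p^{C_p+1}y^{C_p+1})}
=\frac{\rho_p y}{1-\rho_p y}-\frac{(C_p+1)(\rho_p y)^{C_p+1}}{1-(\rho_p y)^{C_p+1}}.
\]
Defining $F(y)$ as the left hand side of~\eqref{gamma} and $F_0(y)=\sum_p\beta_p\rho_p y/(1-\rho_p y)$ as the left hand side of~\eqref{appgamma}, this identity immediately yields $F(y)\leq F_0(y)$ on the common domain $y\in[0,1/\max_p\rho_p)$, with the explicit gap
\[
F_0(y)-F(y)=\sum_{p=1}^P\beta_p\,\frac{(C_p+1)(\rho_p y)^{C_p+1}}{1-(\rho_p y)^{C_p+1}}\geq 0.
\]

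First I would check that both equations have unique solutions in $(0,1)$. Both $F$ and $F_0$ are smooth and strictly increasing on their domains, vanish at $0$, and blow up as $y\uparrow 1/\max\rho_p$. The saturation condition~\eqref{sat2} is exactly $F(1)>1$; combined with $F_0\geq F$ this gives $F_0(1)>1$ as well, so each of $F(y)=1$ and $F_0(y)=1$ admits a unique solution in $(0,1)$, namely $\delta(\underline{C})$ and $\delta_0$ respectively. The inequality $\delta_0\leq\delta(\underline{C})$ follows since $F_0(\delta(\underline{C}))\geq F(\delta(\underline{C}))=1=F_0(\delta_0)$ and $F_0$ is strictly increasing.

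For the quantitative bound, I would evaluate the gap at $y=\delta(\underline{C})$:
\[
F_0(\delta(\underline{C}))-F_0(\delta_0)=F_0(\delta(\underline{C}))-F(\delta(\underline{C}))=\sum_{p=1}^P\beta_p\,\frac{(C_p+1)(\rho_p\delta(\underline{C}))^{C_p+1}}{1-(\rho_p\delta(\underline{C}))^{C_p+1}},
\]
and apply the mean value theorem to $F_0$. Since $F_0'(y)=\sum_p\beta_p\rho_p/(1-\rho_p y)^2$ is increasing and positive, on $[\delta_0,\delta(\underline{C})]\subset[0,1)$ it is bounded below by $F_0'(0)=\sum_p\beta_p\rho_p$, yielding
\[
\delta(\underline{C})-\delta_0\leq\frac{1}{\sum_p\beta_p\rho_p}\sum_{p=1}^P\beta_p\,\frac{(C_p+1)(\rho_p\delta(\underline{C}))^{C_p+1}}{1-(\rho_p\delta(\underline{C}))^{C_p+1}}.
\]
Finally I would eliminate $\delta(\underline{C})$ from the right hand side using $\delta(\underline{C})<1$ and $\rho_p<1$: the function $x\mapsto x/(1-x)$ is increasing on $[0,1)$, and $(\rho_p\delta(\underline{C}))^{C_p+1}\leq\rho_p^{C_p+1}\leq\rho_p^{C_p}$, so $(\rho_p\delta(\underline{C}))^{C_p+1}/(1-(\rho_p\delta(\underline{C}))^{C_p+1})\leq\rho_p^{C_p+1}/(1-\rho_p^{C_p})$, which gives Relation~\eqref{approx}.

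The main subtleties are purely bookkeeping: checking that $\delta(\underline{C})<1$ (granted by Theorem~\ref{theo1}) so that all terms remain in the regime where $F_0$ is well defined and increasing, and choosing the right monotone bounds so that the fractions simplify to the stated form with denominator $1-\rho_p^{C_p}$ rather than $1-(\rho_p\delta(\underline{C}))^{C_p+1}$. No genuinely hard estimate is required.
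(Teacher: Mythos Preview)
Your proof is correct and follows essentially the same route as the paper: both rely on the identity expressing the truncated geometric mean as $\rho_p y/(1-\rho_p y)$ minus the correction $(C_p{+}1)(\rho_p y)^{C_p+1}/(1-(\rho_p y)^{C_p+1})$, then apply a mean value argument to pass from $F(\delta(\underline C))=1$ to $F_0(\delta_0)=1$. The only cosmetic difference is that the paper works with the inverse function $\phi=F_0^{-1}$ and bounds $\phi'\leq 1/\sum_p\beta_p\rho_p$, whereas you work with $F_0$ directly and bound $F_0'$ from below by $\sum_p\beta_p\rho_p$; these are the same estimate.
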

\begin{proof}
For $0\leq y<1$ and $c>0$,  one has
\begin{equation}\label{ineqaux}
\frac{cy^{c+1}-(c+1)y^{c}+1}{1-y^{c+1}}=1-(c+1)y^c\frac{1-y}{1-y^c}\leq 1.
\end{equation}
The saturation condition~\eqref{sat2} and the fact that all $\rho_p$ are less than $1$ imply that
\[
\sum_{p=1}^{P} \beta_p\frac{\rho_p}{1-\rho_p}>1. 
\]
Consequently, for $0\leq x\leq 1$,  there exists a unique solution $\phi(x)<1$ of the equation 
\[
\sum_{p=1}^P \frac{\beta_p\rho_p\phi(x)}{1-\phi(x)\rho_p}=x.
\]
Note that $\delta_0$ is simply $\phi(1)$. It is not difficult to see that $\phi$ is differentiable and 
\[
\phi'(x)\sum_{p=1}^P \frac{\beta_p\rho_p}{(1-\phi(x)\rho_p)^2}=1,
\]
and therefore that $0\leq \phi'(x)\leq 1/(\beta_1\rho_1+\beta_2\rho_2+\cdots+\beta_P\rho_P)$. 

By using again Relation~\eqref{ineqaux}, Equation~\eqref{gamma} can then be rewritten as
\[
\delta(\underline{C})=\phi\left(1+\sum_{p=1}^P \beta_p(C_p+1)\frac{(\rho_p\delta(\underline{C}))^{C_p+1}}{1-(\rho_p\delta(\underline{C}))^{C_p}}\right),
\]
consequently, one obtains the relation
\[
\left|\delta(\underline{C})-\delta_0\right|\leq \sum_{p=1}^P \beta_p(C_p+1)\frac{(\rho_p\delta(\underline{C}))^{C_p+1}}{1-(\rho_p\delta(\underline{C}))^{C_p}}\left/\sum_{p=1}^P \beta_p\rho_p\right.
\]
and therefore Relation~\eqref{approx}. 
\end{proof}
The above lemma states that if the right hand side of Relation~\eqref{approx} is sufficiently small then the simple constant $\delta_0$ is an accurate estimation of the parameter $\delta(\underline{C})$. 
Since the $\rho_p$'s are $<1$,  this approximation will hold in the case where the values of the $C_p$'s are not too small which is a reasonable biological assumption.  One gets therefore the following proposition. 
\begin{proposition}[System with only  Underloaded  Classes of mRNAs]
Under the saturation condition~\eqref{sat2} and if  $\rho_p<1$,  for all   $1\leq p\leq P$, 
then, under the limiting regime~\eqref{beta}, at equilibrium  as $N$ goes to infinity, 
\begin{enumerate}
\item the distribution of the number of free ribosomes is Poisson with parameter
$\delta_0+o(h(\underline{C}))$,
\item the distribution of the  number of ribosomes attached to an mRNA of class $(p,k)$ is geometric with parameter $\rho_p\delta_0+o(h(\underline{C}))$.
\end{enumerate}
where $\delta_0$ is the unique solution of the equation 
\[
\sum_{p=1}^P \beta_p \frac{\rho_p\delta_0}{1-\rho_p\delta_0}=1 \text{ and } h(\underline{C})= \sum_{p=1}^P \beta_p\frac{(C_p+1)\rho_p^{C_p+1}}{1-\rho_p^{C_p}}. 
\]
\end{proposition}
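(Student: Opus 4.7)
The plan is to assemble this proposition directly from three ingredients already in hand: Theorem~\ref{theo1}, Theorem~\ref{TheoEqui}, and the lemma just proved above. No new probabilistic work is required; the task is to propagate the deterministic estimate
\[
\delta(\underline{C})=\delta_0+O(h(\underline{C}))
\]
into the two asymptotic distributional statements, using continuity of the Poisson and truncated‐geometric families in their parameters.

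For assertion (1), I would first invoke Theorem~\ref{theo1}: under the scaling~\eqref{beta} and the saturation condition~\eqref{sat2}, the number of free ribosomes at equilibrium converges in distribution, as $N\to\infty$, to a Poisson random variable with parameter $\delta(\underline{C})$. The preceding lemma provides the sandwich
\[
\delta_0\leq \delta(\underline{C})\leq \delta_0+\frac{h(\underline{C})}{\sum_{p=1}^P \beta_p\rho_p},
\]
so $\delta(\underline{C})=\delta_0+O(h(\underline{C}))$, which is precisely the form of the error announced in (1).

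For assertion (2), I would apply Theorem~\ref{TheoEqui}: for each $p$, $X^N_{p,1}$ converges in distribution, as $N\to\infty$, to a truncated geometric random variable on $\{0,\ldots,C_p\}$ with parameter $\rho_p\delta(\underline{C})$. Two observations then finish the argument. First, by the preceding lemma, $\rho_p\delta(\underline{C})=\rho_p\delta_0+O(h(\underline{C}))$. Second, since $\rho_p<1$ and $\delta(\underline{C})<1$ (a consequence of $\delta_0<1$ together with the lemma), a direct calculation shows that the truncated geometric distribution on $\{0,\ldots,C_p\}$ with parameter $q:=\rho_p\delta(\underline{C})$ differs in total variation from the plain geometric distribution on $\N$ with the same parameter by exactly $q^{C_p+1}\leq \rho_p^{C_p+1}$, which is itself dominated by a constant multiple of $h(\underline{C})$. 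Combining these two observations yields (2).

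Since every needed piece is already in place, I do not anticipate any serious obstacle. The only delicate point is one of interpretation: the $o(h(\underline{C}))$ corrections in the statement should be read as the error remaining \emph{after} the $N\to\infty$ limit has been taken, rather than as a joint limit in $N$ and $\underline{C}$. With this convention, both items follow immediately from the three cited results, and one could in fact sharpen $o(h(\underline{C}))$ to $O(h(\underline{C}))$ everywhere in the statement.
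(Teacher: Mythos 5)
Your proposal is correct and follows exactly the route the paper intends: the proposition is stated as an immediate consequence of the preceding lemma (which gives $|\delta(\underline{C})-\delta_0|\leq h(\underline{C})/\sum_p\beta_p\rho_p$) combined with Theorems~\ref{theo1} and~\ref{TheoEqui}, and your total-variation computation $q^{C_p+1}$ for replacing the truncated geometric by a plain geometric is the same observation the paper makes informally at the start of Section~\ref{BioSec}. Your closing remark is substantively right but the terminology is reversed: replacing $o(h(\underline{C}))$ by $O(h(\underline{C}))$ is a weakening, not a sharpening, and indeed $O(h(\underline{C}))$ is what the lemma actually delivers.
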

\noindent
As a consequence, under the assumptions of the above proposition, one gets that the production rate of proteins of class $(p,k)$ is given by 
\[
\frac{\lambda_p\delta_0}{1-\rho_p\delta_0}+o(h(\underline{C})).
\]

\subsection{The  Case of Overloaded mRNAs} 
The case where at least one of the $\rho_p$ is strictly greater than $1$ is investigated. 
Without loss of generality one can assume the ordering $\rho_1\leq \rho_2\leq\cdots\leq\rho_P$, now suppose that $\rho_P>1$ holds and let
\[
L=\sup\{p\geq 1: \rho_p<1\},
\]
with the convention that $\sup\emptyset=0$.  For simplicity, it is assumed that none of $\rho_p$ is equal to $1$. 

With the same point of view as in the beginning of this section for the truncated geometric random variables associated to the equilibrium of the system: if $\rho_p<1$, then the truncation of the geometric random variable $G$  at $C_p$ has little impact in the analytic expressions and, conversely, if $\rho_p>1$, then the random variable can be expressed as $C_p-G$ where $G$ has a  geometric distribution with parameter $1/\rho_p$. 

The saturation condition can then be written as 
\begin{equation}\label{sat3}
\sum_{p=1}^{L} \beta_p \frac{\rho_p}{1-\rho_p}+\sum_{p=L+1}^{P} \beta_p \left(C_p-\frac{1}{\rho_p-1}\right)+\phi(\underline{C})>1,
\end{equation}
with 
\[
\phi(\underline{C})= -\sum_{p=1}^L \beta_p\frac{(C_p+1)\rho_p^{C_p+1}}{1-\rho_p^{C_p}}+\sum_{p=L+1}^P \beta_p\frac{(C_p+1)}{\rho_p(\rho_p^{C_p}-1)}.
\]
The following proposition gives a simplified version of the results of Section~\ref{Equi} in this case. 
\begin{proposition}[System with Overloaded Classes of  mRNAs]
If the saturation condition~\eqref{sat2} holds and, for some  $1\leq L<P$,  one has $\rho_1\leq \rho_2\leq\cdots\leq \rho_L<1<\rho_{L+1}\leq\cdots\leq \rho_P$ and
\begin{equation}\label{appsat3}
\sum_{p=1}^{L} \beta_p \frac{\rho_p}{1-\rho_p}+\sum_{p=L+1}^{P} \beta_p \left(C_p-\frac{1}{\rho_p-1}\right)>1
\end{equation}
then, under the limiting regime~\eqref{beta}, at equilibrium  as $N$ goes to infinity, 
\begin{enumerate}
\item the distribution of the number of free ribosomes is Poisson with parameter
$\delta_0+o(h(\underline{C}))$.
\item For $1\leq p\leq P$ and $k\geq 1$ the number $R_{p,k}$ of ribosomes attached to an mRNA of class $(p,k)$  is such that
\begin{itemize}
\item if $p\leq L$\\$R_{p,k}$ is  a geometric r.v.  with parameter  $\rho_p\delta_0+o(h(\underline{C}))$; 
\item if $L+1\leq p\leq P$\\
$C_p-R_{p,k}$ is  a geometric r.v. with parameter $\delta_0/\rho_p+o(h(\underline{C}))$,
\end{itemize}
\end{enumerate}
where $1/\rho_{L+1}\leq\delta_0\leq 1 $ is the unique solution of the equation 
\begin{equation}\label{F2}
\sum_{p=1}^{L} \beta_p \frac{\rho_p\delta_0}{1-\rho_p\delta_0}+\sum_{p=L+1}^{P} \beta_p \left(C_p-\frac{1}{\rho_p\delta_0-1}\right)=1
\end{equation}
and
\[
h(\underline{C})= \sum_{p=1}^L \beta_p\frac{(C_p+1)\rho_p^{C_p+1}}{1-\rho_p^{C_p}}+\sum_{p=L+1}^P \beta_p\frac{(C_p+1)}{\rho_p(\rho_p^{C_p}-1)}
\]
\end{proposition}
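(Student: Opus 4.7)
The plan is to parallel the proof of the underloaded proposition, using a splitting of the sum in the fixed-point equation~\eqref{gamma} at the cutoff $L$ that separates the subcritical loads from the overloaded ones. For $p\geq L+1$ the crucial observation is that a truncated geometric on $\{0,\ldots,C_p\}$ with parameter $z>1$ is distributed as $C_p$ minus a truncated geometric with parameter $1/z<1$, so after this substitution both regimes reduce to estimating truncated geometrics with parameter below $1$---exactly the situation already handled.

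Concretely, the $p$th summand of~\eqref{gamma} is the mean of a truncated geometric with parameter $z=\rho_p y$; differentiating the finite geometric series shows that this mean equals $z/(1-z)-(C_p+1)z^{C_p+1}/(1-z^{C_p+1})$ for $z<1$ and, by the $z\leftrightarrow 1/z$ duality, equals $C_p-1/(z-1)+(C_p+1)/(z^{C_p+1}-1)$ for $z>1$. Weighting by $\beta_p$ and summing, equation~\eqref{gamma} becomes the left-hand side of~\eqref{F2} equated to $1+\varepsilon(y,\underline{C})$ with $|\varepsilon(y,\underline{C})|\leq \mathrm{const}\cdot h(\underline{C})$, uniformly for $y$ in a compact subset of $(1/\rho_{L+1},1]$ along which each $\rho_p y$ stays bounded away from $1$. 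One next verifies that the LHS of~\eqref{F2} is strictly increasing in $y$ on $(1/\rho_{L+1},\infty)$ (each summand has positive derivative), tends to $-\infty$ as $y\downarrow 1/\rho_{L+1}$ through the $p=L+1$ term, and at $y=1$ exceeds $1$ by hypothesis~\eqref{appsat3}; hence~\eqref{F2} admits a unique solution $\delta_0\in(1/\rho_{L+1},1)$. Introducing the implicit function $\psi$ defined by LHS$(\psi(x))=x$, one checks exactly as in the underloaded lemma that $\psi$ is Lipschitz near $x=1$, since the derivative of the LHS is bounded below there by a positive constant, and deduces $|\delta(\underline{C})-\delta_0|=O(h(\underline{C}))$.

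The statements about the distributions then follow from Theorems~\ref{theo1} and~\ref{TheoEqui}: the number of free ribosomes converges to a Poisson law with parameter $\delta(\underline{C})=\delta_0+O(h(\underline{C}))$, and each $X_{p,1}^N$ converges to an independent truncated geometric with parameter $\rho_p\delta(\underline{C})$. For $p\leq L$ the truncation contributes only a correction of order $h(\underline{C})$ since the parameter stays below $1$; for $p\geq L+1$ one applies the duality once more to write the limit as $C_p$ minus a truncated geometric with parameter $1/(\rho_p\delta(\underline{C}))$, and again drops the truncation at the same cost. The main obstacle is to make these approximations uniform in the limiting regime: the bound $h(\underline{C})$ implicitly requires $\rho_p\delta_0$ to be bounded away from $1$ for every $p$, and this non-degeneracy, together with the location of $\delta_0$ strictly inside $(1/\rho_{L+1},1)$, has to be extracted carefully from the strict inequality in the saturation condition~\eqref{appsat3} and preserved as $N\to\infty$.
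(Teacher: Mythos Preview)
Your proposal is correct and follows essentially the same route as the paper's proof, which merely states that the existence of $\delta_0\in(1/\rho_{L+1},1)$ follows from Condition~\eqref{appsat3} (hence $\rho_p\delta_0>1$ for $p>L$) and that ``the rest of the proof follows the same arguments as in the proof of the above lemma.'' You have spelled out those arguments in detail---the $z\leftrightarrow 1/z$ duality for truncated geometrics, the monotonicity and boundary behavior of the left-hand side of~\eqref{F2}, and the Lipschitz bound on the implicit function $\psi$---which is exactly what the paper intends but does not write out.
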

\begin{proof}
The existence of a solution $1/\rho_{p}<\delta_0<1$ to Equation~\eqref{F2} is a consequence of Condition~\eqref{appsat3} and therefore that $\delta_0\rho_p>1$ for $L+1\leq p\leq P$.  The rest of the proof follows the same arguments as in the proof of the above lemma. 
\end{proof}
Coming back to the proof of Theorem~\eqref{theo1} where  a change of probability was used, for $1\leq p\leq P$, a geometric random variable with parameter $\rho_p$ is transformed  with a reduced parameter $\delta(\underline{C})\rho_p$, it should be noted that this  change of distribution does not  affect the set of overloaded classes under the assumptions of the above proposition, i.e.  for $L+1\leq p\leq P$, one has $\delta(\underline{C})\rho_p>1$. 
\appendix

\section{Edgeworth Expansion for the distribution of  ${S}^{\theta_N}_N$}
Recall that the variable ${S}^{\theta_N}_N$ is given by 
\[
{S}^{\theta_N}_N=N-\sum_{p=1}^P\sum_{k=1}^{K_p^N}G^{\theta_N}_{p,k},
\]
where, for $1\leq p\leq P$,  $(G^{\theta_N}_{p,k},k\geq 0)$ is a sequence of independent geometric random variables  on $\{0,1,\ldots,C_p\}$ with parameter $\rho_p e^{-\theta_N}$. The quantity $\theta_N$ is such that $\E(S_N^{\theta_N})=0$. 
Define
\begin{equation}\label{nN}
n_N=\sum_{p=1}^P\sum_{k=1}^{K_p^N} \E\left(G^{\theta_N}_{p,k}\right),
\end{equation}
to simplify notations one writes in the following $n$ instead of $n_N$ and sets $W_{n}=S_N^{\theta_N}$. 

The random variables $(\xi_{\ell,n},1\leq \ell\leq n)$ are defined as $\xi_{\ell,n}=G_{p_\ell,k_\ell}$ where, for $1\leq \ell \leq n$, $p_\ell$ and $k_\ell$ are such that
\[
 \sum_{p=1}^{p_{\ell}-1} K_p^N< \ell \leq \sum_{p=1}^{p_{\ell}} K_p^N,\quad  k_\ell=\sum_{p=1}^{p_{\ell}} K_p^N-\ell.
\]
Since $\theta_N$ is such that $\E(W_{n})=0$, $W_{n}$ can be represented as 
\[
W_{n}=\sum_{i=1}^{n} \E(\xi_{i,n})-\xi_{i,n},
\]
i.e. $W_{n}$ is the sum of independent centered truncated geometric random variables.
With these notations, one gets, by Lemma~\ref{lemma1}, 
\begin{equation}\label{n/N}
\lim_{N\to+\infty} \frac{n_N}{N}=\sum_{p=1}^P\beta_p\E\left(G_{p,1}^{\theta_\infty}\right). 
\end{equation}

In this context, with the independence property of  geometric random variables, it is quite natural to have a Central Limit  approximation for $S_N^{\theta_N}$. We need a more refined result, namely an {\em Edgeworth expansion}, of the form 
\begin{equation}\label{Edge1}
\sup_{x\in\R} 
\left| \P\left(\frac{W_{n}}{\sqrt{\E(W_{n}^2)}} \leq x\right) -\Phi(x) -\frac{R(x)}{\sqrt{\E(W_{n}^2)}} \right| =o\left(\frac{1}{\sqrt{\E(W_{n}^2)}}\right), 
\end{equation}
where 
\[
\Phi(x)= \frac{1}{\sqrt{2\pi }} \int_{-\infty}^x e^{-v^2/2}\,\diff v \text{ and } R(x)=Q(x)e^{-x^2/2}.
\]
When  the variables $(\xi_{\ell,n})$ are i.i.d. and with some condition on the corresponding characteristic function of the common distribution, Expansion~\eqref{Edge1} holds  and  $Q$ is a polynomial in this case. See Gnedenko and Kolmogorov~\cite[Theorem~2, page~210]{Gnedenko:01} or  Petrov~\cite[Theorem~1, page~159 and Theorem~7, page~175]{Petrov}. 

Unfortunately,the above condition for the expansion excludes  the case of geometric distributions that we have, in fact it excludes lattice distributions in general. In this case, a similar expansion may  hold but with a more complicated technical framework,
\[
R(x)=\left(Q(x)+L\left(x\sigma_N\sqrt{N}\right)\right)e^{-x^2/2},
\]
where $Q(x)$ is some polynomial  and $L$ is a periodic function. See Gnedenko and Kolmogorov~\cite[Theorem~1, page~213]{Gnedenko:01} or   Petrov~\cite[Theorem~6, page~171]{Petrov}  

Note that the random variables $(\xi_{\ell,n})$  are not identically distributed but still independent, there is also analogous  analogous expansion~\eqref{Edge1} but the level of technicality of the proof is an order of magnitude higher. The approach taken by  Barbour and {\v{C}}ekanavi{\v{c}}ius~\cite{Barbour} is of establishing a bound on the distance between  the distribution of $W_{n}$ and  some shifted Poisson distribution, Stein method is the main ingredient of the proof.  The proof of the  bounds obtained by Pipiras~\cite{Pipiras} follows a more classical approach, it turns out that they are more convenient in our case. 
%At the same time it will complete the assertions of~\cite[p. 510]{Barbour} concerning this paper.  
The following proposition is just a direct application of the expansion obtained by Pipiras~\cite[Theorem~1]{Pipiras}, its main purpose is of gathering the (sometimes quite involved)  definitions and various estimates for our special case. 

\begin{proposition}\label{Pipi}
If $W_{n}=S_N^{\theta_N}$ and $n=n_N$ are defined by Relation~\eqref{nN}, then, for all $N\geq 1$, 
\begin{equation}\label{BoundApp}
\sup_{x\in\R} \left| \P\left(\frac{W_{n}}{\sqrt{\E(W_{n}^2)}} \leq x\right) -\Phi(x) - R_{n}(x)e^{-x^2/2}\right|\leq  C_0\left(L_{4n}+\frac{1}{B_{2n}^{3/2}}\right)
\end{equation}
for some constant $C_0>0$, and $R_n=Q_n+D_n$ with
\begin{align*}
Q_n(x)&= Q_{1n}(x)L_{3n}+Q_{2n}(x)L_{4n}\\
D_n(x)&=\frac{1}{\sqrt{B_{2n}}}S_1\left(x \sqrt{B_{2n}}\right) \cdot e^{x^2/2}\frac{\diff}{\diff x} \left(e^{-x^2/2}Q_n(x)\right)\\
&+ \frac{1}{B_{2n}} S_2\left(x \sqrt{B_{2n}}\right)\cdot e^{x^2/2}\frac{\diff^2}{\diff^2 x} \left(e^{-x^2/2}Q_n(x)\right).
\end{align*}
The functions $Q_{1n}$ and $Q_{2n}$ are polynomials whose degree and coefficients are bounded with respect to $n$ such that $Q_{1n}(0)=0$ and the functions $S_1$ and $S_2$ are bounded. 
The sequences  $(B_{\nu n})$ and $(L_{\nu n})$ satisfy
\[
\lim_{N\to+\infty} \frac{B_{\nu n}}{n}=A_\nu\stackrel{\text{def.}}{=}\left.\sum_{p=1}^P \beta_p\E\left(\left[G_{p,1}^{\theta_\infty}-\E(G_{p,1}^{\theta_\infty})\right]^\nu\right)\right/\sum_{p=1}^P \beta_p
\]
where $\theta_\infty$ is defined in Lemma~\ref{lemma1} and
\begin{equation}\label{apeq1}
\lim_{N\to+\infty} {n^{\nu/2-1}}{L_{\nu n}}=\frac{A_\nu}{A_2^{\nu/2}}.
\end{equation}
\end{proposition}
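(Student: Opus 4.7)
The plan is to verify that the setting of the proposition fits the hypotheses of Pipiras~\cite[Theorem~1]{Pipiras}, apply that theorem verbatim, and then carry out the asymptotic bookkeeping on the moment quantities $B_{\nu n}$ and $L_{\nu n}$ using Lemma~\ref{lemma1} and the scaling~\eqref{beta}. Since $W_n=S_N^{\theta_N}$ is written as a sum $\sum_{i=1}^n (\E(\xi_{i,n})-\xi_{i,n})$ of $n=n_N$ independent, centered, bounded, integer-valued random variables (each $\xi_{i,n}$ is a truncated geometric on $\{0,\dots,C_{p_i}\}$ with $p_i$ taking only $P$ different values), the summands fall into finitely many lattice distribution types, all with bounded support. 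This is precisely the framework in which Pipiras' expansion holds, with the periodic functions $S_1,S_2$ coming from the usual Fourier correction for lattice distributions, and with polynomials $Q_{1n}$, $Q_{2n}$ whose coefficients are polynomial expressions in the cumulants of order $\leq 4$ of the summands; in particular these coefficients are bounded uniformly in $N$ because the $\xi_{i,n}$'s have uniformly bounded support. The identity $Q_{1n}(0)=0$ is the standard feature of the third-order Edgeworth correction (which is an odd Hermite polynomial times $L_{3n}$).

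Once the expansion is in place, I would identify $B_{\nu n}$ with the (Pipiras-normalized) aggregated $\nu$th central moment
\[
B_{\nu n}=\frac{1}{n}\sum_{i=1}^{n}\E\!\left(\bigl[\xi_{i,n}-\E(\xi_{i,n})\bigr]^\nu\right)
=\frac{1}{n}\sum_{p=1}^P K_p^N\,\E\!\left(\bigl[G_{p,1}^{\theta_N}-\E(G_{p,1}^{\theta_N})\bigr]^\nu\right),
\]
and $L_{\nu n}=B_{\nu n}/(n B_{2n})^{\nu/2-1}\cdot B_{2n}^{-\nu/2}$ up to Pipiras' conventions (i.e.\ the Lyapunov-type ratio of the $\nu$th moment sum to the $\nu/2$ power of the variance sum). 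Both are continuous functions of the parameters $(\rho_p e^{-\theta_N})_{1\leq p\leq P}$, $(C_p)$ and of the ratios $K_p^N/N$. By Lemma~\ref{lemma1}, $\theta_N\to\theta_\infty=-\log\delta(\underline{C})$, and by~\eqref{beta} and~\eqref{n/N} we have $K_p^N/n \to \beta_p/\sum_q\beta_q$. Plugging these convergences in gives the claimed limit $B_{\nu n}/n\to A_\nu$; the formula~\eqref{apeq1} for $n^{\nu/2-1}L_{\nu n}$ follows from the same computation together with the definition of $A_\nu$ divided by the appropriate power of $A_2$.

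The main obstacle is bookkeeping rather than substance: one must make sure that the independent-but-not-identically-distributed form of Pipiras' theorem applies (his hypotheses are a Cram\'er-type lattice condition plus a uniform lower bound on the per-summand variance), and these are immediate here because each class $p$ appears $K_p^N\sim N\beta_p$ times and the truncated geometric distribution with parameter $\rho_p e^{-\theta_N}\in(0,\infty)$ is non-degenerate for all $N$ large (since $\theta_N\to\theta_\infty$ is finite and $C_p\geq 1$). Uniform boundedness of the summands implies uniform bounds on all cumulants and hence on the coefficients of $Q_{1n}, Q_{2n}$, as required for the statement. Finally, the remainder bound $C_0(L_{4n}+B_{2n}^{-3/2})$ is exactly the form produced by Pipiras, so no further work is needed beyond observing that both terms are $O(N^{-1/2})$ under the scaling at hand, which is what subsequent arguments in the paper rely on.
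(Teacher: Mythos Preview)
Your approach is essentially identical to the paper's: verify the hypotheses of Pipiras~\cite[Theorem~1]{Pipiras} (uniform lower bound on variance, bounded moments, $L_{3n}=O(n^{-1/2})$), apply it with $r=4$, and then read off the asymptotics of $B_{\nu n}$ and $L_{\nu n}$ from Lemma~\ref{lemma1} and the scaling~\eqref{beta}. One bookkeeping slip to fix: in the paper's (and Pipiras') conventions $B_{\nu n}=\sum_{i=1}^n \E\bigl[(\xi_{i,n}-\E\xi_{i,n})^\nu\bigr]$ is the \emph{unnormalized} sum (so that $B_{\nu n}/n\to A_\nu$ and $L_{\nu n}=B_{\nu n}/B_{2n}^{\nu/2}$), whereas your displayed formula for $B_{\nu n}$ carries an extra $1/n$; drop that factor and your argument matches the paper exactly.
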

 For the  general definitions of the polynomials $(Q_{\nu n})$, see Statuljavi{\v{c}}us~\cite[Theorem~4]{Statu} or  Gnedenko and Kolmogorov~\cite[page~191]{Gnedenko:01} . For   $S_1$ and $S_2$, see Theorem~1 of Pipiras~\cite{Pipiras}. 
\begin{proof}
With the notations of Pipiras~\cite{Pipiras}, for $\nu\geq 1$,
\[
B_{\nu n}\stackrel{\text{def.}}{=}\sum_{i=1}^{n} \E\left(\xi_{i,n}^\nu\right), \quad L_{\nu n}\stackrel{\text{def.}}{=}\frac{B_{\nu n}}{B_{2n}^{\nu/2}},
\]
Lemma~\ref{lemma1} gives the corresponding asymptotics for the sequences $(B_{\nu n})$ and $(L_{\nu n})$. 
It is not difficult to check that the conditions (1)-(4) of \cite[Theorem~1]{Pipiras} are satisfied: conditions (1)-(3) because the second moments of the centered truncated geometric random variables are bounded below and their third moment is bounded. Condition~(4) holds because $L_{3n}$ is of the order of $1/\sqrt{n}$.  By taking $r=4$ in this result, one  gets the desired relation.
\end{proof}

\providecommand{\bysame}{\leavevmode\hbox to3em{\hrulefill}\thinspace}
\providecommand{\MR}{\relax\ifhmode\unskip\space\fi MR }
% \MRhref is called by the amsart/book/proc definition of \MR.
\providecommand{\MRhref}[2]{%
  \href{http://www.ams.org/mathscinet-getitem?mr=#1}{#2}
}
\providecommand{\href}[2]{#2}

\end{document}